\documentclass[11pt,letterpaper]{article}
\pdfoutput=1 

\usepackage[linesnumbered,boxruled,vlined]{algorithm2e}

\usepackage[utf8]{inputenc}
\usepackage[T1]{fontenc}

\usepackage{cite} 
\usepackage{graphicx}
\usepackage{threeparttable}
\usepackage{booktabs}
\usepackage[margin=1in]{geometry}
\usepackage{amsmath,setspace,amssymb} 
\usepackage{amsthm} 
\usepackage{latexsym}
\usepackage{comment}
\usepackage{enumitem}
\setlist{itemsep=2pt, topsep=4pt, parsep=0pt} 
\setlist[enumerate]{label=(\arabic*)} 
\usepackage[unicode,pdfencoding=auto,colorlinks=true,linkcolor=magenta,citecolor=blue]{hyperref}
\DeclareUrlCommand\path{\urlstyle{same}} 
\usepackage{bookmark}
\usepackage{color}
\usepackage{thm-restate}
\usepackage{here}        
\usepackage[capitalise,noabbrev,nameinlink]{cleveref} 

\usepackage{libertinus}

\graphicspath{{../figures/}{./}}

\theoremstyle{plain}
\newtheorem{theorem}{Theorem}
\newtheorem{lemma}[theorem]{Lemma}
\newtheorem{corollary}[theorem]{Corollary}

\theoremstyle{definition}
\newtheorem{definition}[theorem]{Definition}

\usepackage[most]{tcolorbox}
\newtcolorbox{problemdef}{
  colback=white,
  colframe=black,
  boxrule=0.4pt,
  arc=0pt,
  outer arc=0pt,
  left=4pt,
  right=4pt,
  top=2pt,
  bottom=2pt,
  before skip=6pt,
  after skip=6pt
}

\newcommand{\dist}{\mathsf{dist}}

\newcommand{\indeg}{\mathrm{indeg}}

\newcommand{\tabnote}[1]{{\fontsize{8pt}{9.6pt}\selectfont #1}} 

\allowdisplaybreaks[2]

\title{A Simple Algorithm for the Directed Multiple Source Replacement Paths Problem}
\author{
Kaito Harada \\
The University of Osaka\\
k-harada@ist.osaka-u.ac.jp \and 
Taisuke Izumi \\
The University of Osaka\\
t-izumi@ist.osaka-u.ac.jp
}

\date{}

\begin{document}

\maketitle
\thispagestyle{empty}


\begin{abstract}
    In the \emph{replacement paths} (RP) problem, we are given a graph $G = (V, E)$ with $n = |V|$ and $m = |E|$, together with two vertices $s, t \in V$, and are asked to compute the shortest-path distance from $s$ to $t$ in $G \setminus e$ for every failed edge $e \in E$.
    The \emph{multiple source replacement paths} (MSRP) problem is its natural generalization: given a set $S \subseteq V$ of $\sigma$ sources, compute the replacement path distances for all pairs in $S \times V$.

    In this paper, we present a randomized combinatorial algorithm that solves MSRP on unweighted directed graphs in $\tilde{O}(m\sqrt{\sigma n} + \sigma n^2)$ time\footnote{The $\tilde{O}(\cdot)$ notation omits polylogarithmic
    factors, i.e., $\tilde{O}(f(n)) = O(f(n)\,\mathrm{polylog}(n))$, and the notation $\tilde{\Theta}(\cdot)$ is defined analogously.}, with all the output distances correct with high probability.
        This improves the best known bound $\tilde{O}(m\min\{\sigma\sqrt{n}, n\} + \sigma n^2)$ for directed graphs, which is obtained either by running the single source RP algorithm of Chechik and Magen [ICALP'20] from each source separately or by constructing and querying the all-pairs distance sensitivity oracle of Bernstein and Karger [STOC'09].
        Our running time is essentially tight among combinatorial algorithms because Gupta, Jain, and Modi [PODC'20] proved a lower bound of $m{(\sigma n)}^{1/2-o(1)}$ for such algorithms, which holds even on undirected graphs, and the additive term $\sigma n^2$ is proportional to the time needed to write down the $\Theta(\sigma n^2)$ output distances.

        The algorithm is also remarkably simple.
        Its key technical idea is inspired by the auxiliary graph construction of Gupta, Jain, and Modi: for each source $s$, one builds a weighted graph whose vertices include one for every pair $(t, e)$ of a terminal $t$ and a failed edge $e$ on a shortest $s$--$t$ path, and in which the shortest-path distance from a single super source to the vertex of $(t, e)$ equals the replacement path distance for $(t, e)$ in the original graph.
        In their algorithm, however, this direct reduction to a shortest-path computation in the auxiliary graph applies to only one of the three cases distinguished by their analysis, and the other two are handled by separate subroutines and analyses that rely on a landmark sampling technique.
        We show that a slight modification of the construction, which embeds the information derived from the landmark sampling into the edges of the graph itself, extends this reduction to \emph{all} the cases at once.
        Hence, once this extended auxiliary graph is constructed, all the replacement path distances from $s$ are obtained by a single run of Dijkstra's algorithm on it. In addition,
    this modification bypasses the part of their analysis crucially relying on the graph being undirected, which is exactly what makes the directed case accessible.
    This simplicity not only makes the description of the algorithm concise, but also substantially shortens its analysis.
\end{abstract}

\section{Introduction}
\subsection{Background and Our Result}\label{subsec:result}

The \emph{replacement paths problem} (RP) is defined as follows: given a graph $G = (V, E)$ with $n = |V|$ and $m = |E|$, and two vertices $s, t \in V$, for each edge $e$ on a shortest $s$--$t$ path $P$ in $G$, compute the shortest-path distance from $s$ to $t$ (or a shortest path itself) in $G \setminus e$, the graph obtained from $G$ by removing $e$.
This problem abstracts the computation of detour routes under a link failure in networks and also arises in classical applications such as the computation of Vickrey prices in shortest path auctions~\cite{HS01} and the finding of $k$ simple shortest paths~\cite{Yen71,KIM82,RZ12}.

The most naive algorithm for RP first computes a shortest $s$--$t$ path $P$, and then, for each edge $e$ on $P$, runs a single-source shortest path algorithm on $G \setminus e$ (i.e., breadth-first search for unweighted graphs or Dijkstra's algorithm for non-negatively weighted graphs).
Since the number of edges on $P$ is at most $n-1$, this approach runs in $\tilde{O}(mn)$ time.
A central question in the study of RP is whether and to what extent this baseline algorithm can be improved.

There are two major algorithmic approaches for RP, each forming a distinct line of research.
\emph{Combinatorial algorithms} are those that do not rely on fast matrix multiplication.
The best known combinatorial algorithms run in $\tilde{O}(m)$ time for non-negatively weighted undirected graphs~\cite{MMG89,HS01} and in $\tilde{O}(m \sqrt{n})$ time for unweighted directed graphs~\cite{RZ12,ACC19}.
\emph{Algebraic algorithms} exploit fast matrix multiplication.
For directed graphs with integer edge weights in $[-M, M]$, RP can be solved in $\tilde{O}(Mn^{\omega})$ time~\cite{VW11,CN20}, where $\omega \in [2, 2.371339)$ denotes the matrix multiplication exponent~\cite{ADWXXZ25}.
Conditional lower bounds have also been investigated extensively.
For directed graphs with arbitrary edge weights, any algorithm requires $n^{3-o(1)}$ time under the APSP conjecture~\cite{VWW18}, which states that the all-pairs shortest paths (APSP) problem with general edge weights cannot be solved in $O(n^{3-\varepsilon})$ time for any constant $\varepsilon > 0$.
That is, RP is as hard as APSP in this most general setting.
For unweighted directed graphs, any combinatorial algorithm requires $mn^{1/2-o(1)}$ time under the combinatorial BMM conjecture~\cite{VWW18}, which states that any combinatorial algorithm for the Boolean matrix multiplication of two $n \times n$ matrices requires $n^{3-o(1)}$ time.
This bound indicates that the $\tilde{O}(m \sqrt{n})$ upper bound of combinatorial algorithms is essentially tight.

A natural generalization of RP, which has been actively studied in recent years, is to relax the number of sources and terminals.
In the \emph{single source replacement paths} (SSRP) problem, we are given a single source $s$, and are asked to compute the replacement path distances from $s$ to all terminals $t \in V$ for all edge failures.
For unweighted graphs, SSRP can be solved in $\tilde{O}(m\sqrt{n} + n^2)$ time on both undirected~\cite{CC19,BCFS21} and directed~\cite{CM20,BCCFS22} graphs.
Generalizing the problem further, Gupta, Jain, and Modi~\cite{GJM20} introduced the \emph{multiple source replacement paths} (MSRP) problem, which is formalized as follows.
\begin{problemdef}
    \textbf{Multiple Source Replacement Paths Problem (MSRP)}

    \textbf{Input:}
    A graph $G = (V, E)$ and a set $S \subseteq V$ of $\sigma$ sources.

    \textbf{Output:}
    The distance $\dist_{G \setminus e}(s, t)$ from $s$ to $t$ in $G \setminus e$, for each $(s, t) \in S \times V$ and edge $e \in E$.
\end{problemdef}

We remark that the output size is actually $\Theta(\sigma n^2)$ because for each source $s \in S$ and terminal $t \in V$, we only care about edges on a shortest $s$--$t$ path.

MSRP contains the problems mentioned above as special cases: taking $S = \{s\}$ recovers SSRP, and taking $S = V$ yields the all-pairs setting.
For undirected unweighted graphs, Gupta, Jain, and Modi~\cite{GJM20} gave a randomized combinatorial algorithm that solves MSRP in $\tilde{O}(m\sqrt{\sigma n} + \sigma n^2)$ time.
They also proved a conditional lower bound of $m{(\sigma n)}^{1/2-o(1)}$ for combinatorial algorithms based on the combinatorial BMM conjecture.
While this lower bound is stated for undirected graphs, it holds for the directed case as well: the graph constructed in their reduction is layered, and orienting all of its edges in the direction from the sources toward the terminals leaves the argument intact.
For directed graphs, however, no algorithm designed specifically for MSRP has been known, and there are two generic combinatorial baselines: running the directed SSRP algorithm from each source separately takes $\tilde{O}(\sigma m\sqrt{n} + \sigma n^2)$ time, and constructing the all-pairs distance sensitivity oracle of Bernstein and Karger~\cite{BK09} and querying all the $\Theta(\sigma n^2)$ relevant triples takes $\tilde{O}(mn + \sigma n^2)$ time.
Thus the best known combinatorial bound has been $\tilde{O}(m\min\{\sigma\sqrt{n}, n\} + \sigma n^2)$.
\cref{tab:known-results,tab:algebraic-results} summarize the known combinatorial and algebraic results for the replacement path problems, respectively.

In this paper, we close the gap between the undirected and directed bounds: we improve the first term of the directed bound to $\tilde{O}(m\sqrt{\sigma n})$, an improvement by a factor of $\min\{\sqrt{\sigma}, \sqrt{n/\sigma}\}$, and give a near-optimal interpolation between the single source ($\sigma = 1$) and all-pairs ($\sigma = n$) endpoints.

\begin{table}[t]
    \centering
    \caption{Known combinatorial results for the replacement paths problems. All lower bounds are conditional: those for weighted settings assume the APSP conjecture, while those for unweighted settings assume the combinatorial BMM conjecture. The bound marked with $\dagger$ is obtained by querying an all-pairs distance sensitivity oracle with all the $\Theta(\sigma n^2)$ relevant triples.}\label{tab:known-results}
    \setlength{\tabcolsep}{4pt}
    \begin{tabular}{cccccc}
        \toprule
        $(\#s, \#t)$  & Direction   & Weights    & Det./Rand. & Upper Bound                                                            & Lower Bound                                       \\
        \midrule
        $(1, 1)$      & undir.      & unwei.     & det.       & $O(m)$~\tabnote{\cite{LL14}}                                           & -                                                 \\
        $(1, 1)$      & dir.        & unwei.     & det.       & $\tilde{O}(m\sqrt{n})$~\tabnote{\cite{RZ12,ACC19}}                     & $mn^{1/2-o(1)}$~\tabnote{\cite{VWW18}}            \\
        $(1, 1)$      & undir.      & $[1, n^c]$ & det.       & $\tilde{O}(m)$~\tabnote{\cite{MMG89,HS01}}                             & -                                                 \\
        $(1, 1)$      & dir.        & $[1, n^c]$ & det.       & $\tilde{O}(mn)$~\tabnote{[Folklore]}                                   & APSP-hard~\tabnote{\cite{VWW18}}                  \\
        \midrule
        $(1, n)$      & undir.      & unwei.     & det.       & $\tilde{O}(m\sqrt{n} + n^2)$~\tabnote{\cite{CC19,BCFS21}}              & $mn^{1/2-o(1)}$~\tabnote{\cite{CC19}}             \\
        $(1, n)$      & dir.        & unwei.     & det.       & $\tilde{O}(m\sqrt{n} + n^2)$~\tabnote{\cite{CM20,BCCFS22}}             & $mn^{1/2-o(1)}$~\tabnote{\cite{CC19}}             \\
        $(1, n)$      & undir./dir. & $[1, n^c]$ & det.       & $\tilde{O}(mn)$~\tabnote{[Folklore]}                                   & APSP-hard~\tabnote{\cite{CC19}}                   \\
        \midrule
        $(\sigma, n)$ & undir.      & unwei.     & rand.      & $\tilde{O}(m\sqrt{\sigma n} + \sigma n^2)$~\tabnote{\cite{GJM20}}      & $m{(\sigma n)}^{1/2-o(1)}$~\tabnote{\cite{GJM20}} \\
        $(\sigma, n)$ & dir.        & unwei.     & rand.      & $\tilde{O}(m\sqrt{\sigma n} + \sigma n^2)$~\tabnote{[\cref{thm:main}]} & $m{(\sigma n)}^{1/2-o(1)}$~\tabnote{\cite{GJM20}} \\
        $(\sigma, n)$ & dir.        & $[1, n^c]$ & det.       & $\tilde{O}(mn + \sigma n^2)^{\dagger}$~\tabnote{\cite{BK09}}           & APSP-hard~\tabnote{\cite{VWW18}}                  \\
        \bottomrule
    \end{tabular}
\end{table}

\begin{table}[t]
    \centering
    \caption{Known algebraic algorithms for the replacement paths problems. The bounds marked with $\dagger$ are obtained by querying all-pairs distance sensitivity oracles with all the $\Theta(\sigma n^2)$ relevant triples.}\label{tab:algebraic-results}
    \setlength{\tabcolsep}{4pt}
    \begin{tabular}{ccccc}
        \toprule
        $(\#s, \#t)$  & Direction   & Weights   & Det./Rand. & Upper Bound                                            \\
        \midrule
        $(1, 1)$      & dir.        & $[-M, M]$ & det.       & $\tilde{O}(Mn^{\omega})$~\tabnote{\cite{VW11,CN20}}         \\
        $(1, 1)$      & dir.        & $[-M, M]$ & det.       & $\tilde{O}(M^{0.681}n^{2.575})$~\tabnote{\cite{VW11}}  \\
        \midrule
        $(1, n)$      & undir.      & $[1, M]$  & det.       & $\tilde{O}(Mn^{\omega})$~\tabnote{\cite{GV20,BCFS21}}       \\
        $(1, n)$      & dir. & $[1, M]$  & rand.      & $\tilde{O}(Mn^{\omega})$~\tabnote{\cite{GV20}}         \\
        $(1, n)$      & dir.        & $[-M, M]$ & rand.      & $O(M^{0.7519}n^{2.5286})$~\tabnote{\cite{GV20}}        \\
        $(1, n)$      & dir.        & $[-M, M]$ & rand.      & $O(M^{0.8043}n^{2.4957})$~\tabnote{\cite{GPVX21}}      \\
        $(1, n)$      & dir.        & $[-M, M]$ & rand.      & $\tilde{O}(M^{0.8825}n^{2.4466})$~\tabnote{\cite{D23}} \\
        \midrule
        $(\sigma, n)$ & dir.        & unwei.    & rand.      & $O(n^{2.529} + \sigma n^2)^{\dagger}$~\tabnote{\cite{KS23}}      \\
        $(\sigma, n)$ & dir.        & $[1, M]$  & det.       & $O(Mn^{2.8068} + \sigma n^2)^{\dagger}$~\tabnote{\cite{BCCFS22}} \\
        $(\sigma, n)$ & dir.        & $[1, M]$  & rand.      & $O(Mn^{2.5794} + \sigma n^2)^{\dagger}$~\tabnote{\cite{GR21}}    \\
        $(\sigma, n)$ & dir.        & $[-M, M]$ & rand.      & $\tilde{O}(Mn^{2.8729} + \sigma n^2)^{\dagger}$~\tabnote{\cite{CC20}} \\
        \bottomrule
    \end{tabular}
\end{table}

\begin{theorem}\label{thm:main}
    There is a randomized combinatorial algorithm for the MSRP problem on unweighted directed graphs that runs in $\tilde{O}(m\sqrt{\sigma n} + \sigma n^2)$ time.
    Our randomized algorithm is Monte Carlo with a one-sided error: it always outputs distances that are at least the exact distances, and all the outputs are exact with high probability.
\end{theorem}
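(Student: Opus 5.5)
Fix a parameter $L = \Theta(\sqrt{n/\sigma})$ and classify each triple $(s,t,e)$ with $e$ on the BFS-tree path $T_s(s,t)$ by how far the detour departs from $e$. First, for every $s\in S$ compute a BFS out-tree $T_s$ in $G$; this costs $O(\sigma m)$, which is at most $O(m\sqrt{\sigma n})$ since $\sigma\le n$. Only tree edges matter, because if $e\notin T_s(s,t)$ then $\dist_{G\setminus e}(s,t)=\dist_G(s,t)$. Next, sample a landmark set $R\subseteq V$ by including each vertex independently with probability $\Theta(\log n/L)$, so $|R| = \tilde{O}(n/L) = \tilde{O}(\sqrt{\sigma n})$. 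With high probability every shortest path with at least $L$ vertices contains a landmark. For every $r\in R$ compute an out-BFS and an in-BFS in $G$, and compute the directed single-source replacement-path data of Chechik and Magen~\cite{CM20} from each landmark, both forward and in the reversed graph. This would give $\dist_{G\setminus e}(r,v)$ and $\dist_{G\setminus e}(v,r)$ for edges $e$ on the relevant tree paths. The cost is $\tilde{O}(|R|(m\sqrt n+n^2))$, which is too much. So I would instead use only plain BFS distances from and to landmarks, computed in $G$ and in $G$ with a few forced avoidances, within the budget $\tilde{O}(|R|m)=\tilde{O}(m\sqrt{\sigma n})$. Landmark information is then used only in the special form that the replacement path passes through $r$ after leaving $T_s(s,t)$ at a point far below $e$.

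For each $s$ I would build the extended auxiliary graph $H_s$. It has a super source $z$ and a node $(t,e)$ for each $t$ and each edge $e$ among the last $L$ edges of $T_s(s,t)$ (the ``near'' failures). I would also keep a node for each pair (t, far-failure class). Its edges encode three kinds of replacement paths.
\begin{enumerate}
\item \textbf{Short-deviation.} The path leaves the tree, and its last deviating segment is a single $G$-edge $(u,t)$. This gives an edge from $(u,e)$ to $(t,e)$ of weight $1$, or from $z$ with weight $\dist_G(s,u)+1$ when $e\notin T_s(s,u)$.
\item \textbf{Landmark-through.} The replacement path contains a landmark $r$ on a long detour segment. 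This gives an edge from $z$ with weight $\dist_{G\setminus e}(s,r)+\dist_G(r,t)$, where the second term is valid whenever the $r$--$t$ part avoids $e$, which one certifies using the tree position of $e$.
\item \textbf{Far failures.} The failure $e$ is more than $L$ edges above $t$. Here every replacement path has a long suffix, so it is hit by a landmark w.h.p.
\end{enumerate}
For correctness I would check two things. Every edge weight is the length of an actual walk in $G\setminus e$, which gives the one-sided error deterministically. Conversely, I would show by induction along a canonical replacement path (the one with the latest divergence point) that some $z$--$(t,e)$ path in $H_s$ achieves $\dist_{G\setminus e}(s,t)$ w.h.p. One Dijkstra run per $s$ then outputs all distances.

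For the running time, $H_s$ has $O(nL)$ nodes. The short-deviation edges number $\sum_t \indeg(t)\cdot L = O(mL)$ per source, which over all sources gives $\sigma m L = m\sqrt{\sigma n}$. Landmark edges number $O(\sigma n|R|)$ overall, but they can be aggregated per $(t,e)$ into a single min-weight edge, giving $\tilde O(\sigma n^2)$ if we precompute for each $(s,r)$ the values $\dist_{G\setminus e}(s,r)$ along $T_s$. This step is the main obstacle: obtaining $\dist_{G\setminus e}(s,r)$ for all $s\in S$, $r\in R$, and $e\in T_s(s,r)$ within $\tilde O(m\sqrt{\sigma n}+\sigma n^2)$. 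My first attempt would be recursive: these pairs are themselves an MSRP instance with $\sigma$ sources and $|R|$ terminals. I would handle near failures through $H_s$ itself, which means running Dijkstra on $H_s$ with terminals ordered so that the $(r,e)$ values are finalized before they are used. I would handle far failures with a second landmark level, or with the undirected-independent observation that a far failure $e$ and a landmark $r$ within $L$ hops of $t$ decouple. I expect the delicate part of the proof to be making this self-referential use of $H_s$ well-founded; this is the step that bypasses the undirected symmetry argument of~\cite{GJM20}.
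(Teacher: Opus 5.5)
Your proposal has two genuine gaps. The first is exactly the step you flag as the main obstacle. Your $H_s$ has nodes only for the $O(nL)$ near pairs, plus an unspecified node per far-failure class, and such class nodes cannot hold the $\Theta(n^2)$ distinct far values per source. Your landmark edges carry weight $\dist_{G\setminus e}(s,r)+\dist(r,t)$, so $\dist_{G\setminus e}(s,r)$ must be supplied from outside. No self-referential use of your $H_s$ can supply it when $e$ lies more than $L$ edges above $r$, because then $(r,e)$ is not a node.

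The same hole breaks your short-deviation recurrence. For a near $(t,e)$ whose replacement path ends with $(u,t)$, the predecessor $(u,e)$ may be far and has no node. This is why GJM needed a separate long-replacement-path case. Your landmark-through edge does not repair it in directed graphs: for a near failure, a landmark $r$ with $\dist(r,t)\le L$ can have every shortest $r$--$t$ path pass through $e$. ``Certifying via the tree position of $e$'' is not an argument; GJM rule this case out only by traversing an $r$--$u$ path backwards.

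The paper's resolution is to make \emph{every} state $(t,e)$ a node $q_{t,e}$, and to add base nodes $b_w$ entered from $z_s$ with weight $\dist(s,w)$. Near states use only the last-edge recurrence, with no length restriction: the edge comes from $q_{y,e}$ if $e\in\pi(s,y)$, and from $b_y$ otherwise. Far states use edges of weight $\dist(r,t)$ from $q_{r,e}$ or $b_r$. Then $\dist_{G\setminus e}(s,r)$ is provided by $H_s$ itself. Well-foundedness is automatic because every state edge has weight at least $1$, so strong induction on $\dist_{G\setminus e}(s,t)$ gives completeness. A single Dijkstra run evaluates everything, with no ordering of terminals and no precomputation of source-to-landmark replacement distances.

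The second gap is the cost of the far part. With one landmark set $R$ of size $\tilde O(\sqrt{\sigma n})$, each of up to $n^2$ far states per source must minimize over up to $|R|$ candidates. That is $\tilde O(n^2\sqrt{\sigma n})$ work per source. Merging the candidates into ``a single min-weight edge'' reduces the edge count but not the work of computing the minimum, and your count $O(\sigma n|R|)$ omits the factor for the failed edge.

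The missing idea is multiscale sampling: levels $\mathcal{L}_k$ of density $2^{-k}\sqrt{\sigma/n}$. A state with $\Delta_k\le\dist(v,t)<2\Delta_k$ scans only the landmarks $r\in\mathcal{L}_k$ with $\dist(r,t)\le\Delta_k$. A terminal has at most $\Delta_k+1$ such states, so the work per terminal and level is $(\Delta_k+1)|\mathcal{L}_k|=\tilde O(n)$, i.e.\ $\tilde O(n^2)$ per source. The inequality $\dist(r,t)\le\Delta_k\le\dist(v,t)$ also certifies, without any undirectedness, that shortest $r$--$t$ paths avoid $e=(u,v)$.
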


Since the output consists of $\Theta(\sigma n^2)$ distances in the worst case, the first term of our running time matches the conditional lower bound of Gupta et al.~\cite{GJM20}, and the second term is proportional to the time needed to write down the output.
Hence, our algorithm is near-optimal among combinatorial algorithms.

Beyond the improved running time, a notable feature of our algorithm is its simplicity.
Its key technical idea is inspired by the auxiliary graph construction of Gupta et al.~\cite{GJM20}: for each source $s \in S$, one builds a weighted graph whose vertices represent the states $(t, e)$, i.e., the pairs of a terminal and a failed edge, so that all the replacement path distances from $s$ are obtained by a single run of Dijkstra's algorithm on it.
Their construction covers only one of the three cases distinguished by their analysis, and we show that a slight modification of it, which embeds the information derived from the landmark sampling into the edges of the graph itself, covers all the cases at once.
In addition, this modification bypasses the part of their analysis that relies on the graph being undirected.
The only randomness lies in the standard sampling of landmark vertices, with sampling probabilities decreasing geometrically over the levels, and the analysis requires only elementary arguments.
In particular, our algorithm specialized to $\sigma = 1$ serves as a simple alternative to the known randomized algorithms for the SSRP problem~\cite{CC19,CM20}.

\subsection{Related Work}
The literature on replacement paths and fault-tolerant shortest paths is vast.
The exact algorithms most closely related to our result have already been discussed in the previous subsection; here we briefly review three other directions.

\paragraph{Distance sensitivity oracles.}
A research direction complementary to replacement paths algorithms is to preprocess a graph into a compact data structure called a \emph{distance sensitivity oracle} (DSO) that returns $\dist_{G \setminus e}(s, t)$ when queried with a triple $(s, t, e)$.
For directed weighted graphs, a DSO of nearly optimal size $\tilde{O}(n^2)$ with constant query time can be constructed in $\tilde{O}(mn)$ time~\cite{DTCR08,BK09}, and a line of algebraic constructions has brought the preprocessing time down to subcubic for small integer weights~\cite{CC20,R20,GR21,BCCFS22}.
For unweighted directed graphs, Karczmarz and Sankowski~\cite{KS23} constructed a randomized DSO with constant query time in $O(n^{2.529})$ preprocessing time, which matches the best known running time for the all-pairs shortest paths problem in the same setting.
Closest to our problem is the data structure counterpart of MSRP: oracles that answer queries only from a designated source set $S$.
For undirected unweighted graphs, exact single-failure distance oracles for a source set were given by Bil{\`o}, Choudhary, Gual{\`a}, Leucci, Parter, and Proietti~\cite{BCGLPP18} and by Gupta and Singh~\cite{GS18}, the latter achieving size $\tilde{O}(\sqrt{\sigma} n^{3/2})$ with $\tilde{O}(1)$ query time.
In the single source case, oracles of near-optimal size with $\tilde{O}(1)$ query time are known~\cite{BCFS21,DG22}; notably, the oracle of Dey and Gupta~\cite{DG22} can be built in $\tilde{O}(m\sqrt{n})$ time, which matches the conditional lower bound for combinatorial SSRP algorithms.
Recently, Dey and Kavitha~\cite{DK25} constructed fault-tolerant \emph{approximate} distance oracles for a source set with constant stretch and constant query time on undirected weighted graphs.

\paragraph{Approximate settings.}
Another way to bypass the conditional lower bounds is to allow approximation.
Bernstein~\cite{B10} showed that $(1+\varepsilon)$-approximate replacement paths on directed weighted graphs can be computed in $\tilde{O}(m/\varepsilon)$ time, nearly matching the time needed to read the input.
In the single source setting, data structures of near-optimal size that report $(1+\varepsilon)$-approximate distances under a single failure have been developed~\cite{BK13,BCHR20}, and such an oracle can be constructed in nearly linear time even on directed weighted graphs~\cite{HKIM24}.
These results show that allowing a $(1+\varepsilon)$ stretch makes replacement path computation drastically easier.

\paragraph{Multiple failures.}
Extending fault tolerance from a single failed edge to $f \ge 2$ failures is another active line of research.
On the data structure side, exact and approximate oracles supporting multiple failures have been developed~\cite{DP09,WY13,BS19,DR22,CCFK17,DGR21}, culminating in an oracle of Dey and Gupta~\cite{DG24} that is near optimal in both size and query time for every constant number of failures on undirected weighted graphs.
On the algorithmic side, Vassilevska Williams, Woldeghebriel, and Xu~\cite{VWX22} initiated the systematic study of the replacement paths problem under two edge failures, and Chechik and Zhang gave a near-optimal $(1+\varepsilon)$-approximation algorithm for weighted directed graphs~\cite{CZ24a} and the first truly subcubic exact combinatorial algorithm for unweighted directed graphs~\cite{CZ24b}.
Very recently, Nogler and Vassilevska Williams~\cite{NVW26} showed that on undirected graphs, the two-failure replacement paths problem reduces to SSRP, yielding improved algorithms in several settings together with matching conditional lower bounds.
For three failures, Chi, Duan, Wang, and Xie~\cite{CDWX25} presented a deterministic $\tilde{O}(n^3)$-time algorithm on undirected weighted graphs, which almost matches the output size and implies a nearly optimal $\tilde{O}(n^f)$-time algorithm for every $f \ge 3$.

\subsection{Technical Outline}\label{subsec:techniques}

\paragraph{Outline of Algorithm \textsf{GJM}.}
Since our algorithm is based on the undirected MSRP algorithm of Gupta, Jain, and Modi~\cite{GJM20}, which we denote by \textsf{GJM} hereafter, we begin the technical outline of our algorithm by reviewing \textsf{GJM}.
Before reviewing the previous algorithm and ours, we fix the terminology used throughout the paper.
Fix a source $s \in S$, and let $\pi(s, t)$ denote a shortest $s$--$t$ path in $G$ chosen in advance for each terminal $t \in V$.
We call a pair $(t, e)$ of a terminal $t$ and a failed edge $e$ on $\pi(s, t)$ a \emph{state}, and call the replacement path distance $\dist_{G \setminus e}(s, t)$ the \emph{value} of the state $(t, e)$.
That is, the states are exactly the entries of the output for the source $s$, and the task is to compute the value of every state.
First, \textsf{GJM} randomly samples a set $\mathcal{L} = \bigcup_{k} \mathcal{L}_k$ of \emph{multiscale landmarks}, where each $\mathcal{L}_k$ is sampled from all vertices with probability $\frac{1}{2^k}\sqrt{\sigma/n}$ independently. Then, as a preprocessing step, it computes the distance from each landmark to every vertex, by running a BFS starting from each landmark.

We refer to the suffix of the replacement path after departing $\pi(s, t)$ as its \emph{detour part}. By definition, its length is at least as long as $\dist(v, t)$.
The computation of the value of a state $(t, e)$ with $e = (u, v)$ is roughly divided into three cases, according to the scale of $\dist(v, t)$: writing $\Delta_k = \tilde{\Theta}(2^k \sqrt{n/\sigma})$ for the $k$-th scale, a state is called \emph{$k$-far} if $\dist(v, t)$ falls in the $k$-th scale, i.e., $\Delta_k \le \dist(v, t) < \Delta_{k+1}$, and \emph{near} if it falls below the smallest one.
\textsf{GJM} splits the states into the following three cases and runs a different subroutine/analysis on
each of them.
\begin{enumerate}
    \item \emph{$k$-far states.}
          Since $\dist(v, t) \ge \Delta_k$, the detour part is long enough to be hit by $\mathcal{L}_k$, whose density is matched to $\Delta_k$; with high probability it contains a landmark $r$ on the detour part with $\dist(r, t) \le \Delta_k$.
          In addition, as $\dist(r, t) \le \Delta_k \le \dist(v, t)$, no shortest $r$--$t$ path traverses $e$, and hence the value of the state admits the \emph{landmark decomposition}
          $
              \dist_{G \setminus e}(s, t) = \dist_{G \setminus e}(s, r) + \dist(r, t)
          $
          (\cref{lem:far}), whose second term is already known from the BFS from $r$.
          The cost of this case therefore comes down to two tasks, computing $\dist_{G \setminus e}(s, r)$ and finding a landmark $r$ on the detour part.
          \textsf{GJM} provides the former in advance by a separate preprocessing using a generalization of the distance sensitivity oracle of Bernstein and Karger~\cite{BK09} to $\sigma$ sources, which is where the bulk of its work goes, and which relies on the undirectedness of the graph.
          The latter is handled by the multiscale sampling, the main idea of \cite{GJM20}, which limits the number of candidates to be scanned per state.

    \item \emph{Near states with a long replacement path}, i.e., those whose replacement path exceeds $\dist(s, u)$ by more than a constant times the smallest scale.
          Here the detour part is longer than the smallest scale even though $\dist(v, t)$ is not, so the landmarks of level $0$ still hit it and the same decomposition at a landmark $r$ is used.
          Its validity, however, can no longer be argued as above, since $\dist(r, t) \le \dist(v, t)$ is unavailable.
          \textsf{GJM} instead rules out the bad case that every shortest $r$--$t$ path traverses $e$, in which case such a path has the form $r, \dots, u, v, \dots, t$ and hence puts $u$ within distance $\dist(r, t) \le \Delta_0$ of $r$.
          Then the prefix of $\pi(s, t)$ up to $u$, the reverse of the $r$--$u$ subpath of that path, and the $r$--$t$ suffix of the replacement path form, in this order, an $s$--$t$ walk avoiding $e$ of length less than $\dist(s, u) + 2\Delta_0$, which is a replacement path shorter than assumed.
          The middle leg traverses a subpath of the $r$--$t$ path in the reverse direction, and is the other place where the undirectedness of the graph is used.
    \item \emph{Near states with a short replacement path.}
          These are evaluated by the trivial last-edge recurrence: a replacement path for $(t, e)$ ends with an edge $(y, t) \neq e$, and deleting that edge leaves a replacement path for the state $(y, e)$.
          \textsf{GJM} encodes these recurrences as an auxiliary graph whose vertices are the near states and evaluates all of them at once by a single run of Dijkstra's algorithm on it.
          Since this auxiliary graph contains the near states only, the recurrence closes only as long as the predecessor state $(y, e)$ is again near, and the shortness of the replacement path is precisely what guarantees this; this is why this case has to be separated from case~(2).
\end{enumerate}

\paragraph{Our idea.}
We keep the classification of the states by $\dist(v, t)$ and the multiscale landmark sampling of \textsf{GJM} as they are, and build an auxiliary graph in essentially the same way as \textsf{GJM} does for case~(3).
Our new ingredient is the observation that the quantity $\dist_{G \setminus e}(s, r)$ consumed by cases~(1) and~(2) is not external to the problem at all. More precisely, one can see the landmark decomposition as a recurrence formula.
Accordingly, the auxiliary graph $H_s$ we build for each source $s$ has as its vertices \emph{all} the states $(t, e)$, near and far alike, together with one base vertex $b_w$ per vertex $w \in V$ that carries the failure-free distance $\dist(s, w)$.
The landmark decomposition of case~(1) is then embedded into $H_s$ as edges entering the state vertices.
Namely, for a $k$-far state $(t, e)$ and each landmark $r$ of its own level with $\dist(r, t) \le \Delta_k$, we add an edge of weight $\dist(r, t)$ that enters the state vertex of $(t, e)$ and leaves either $b_r$ if $e \notin \pi(s, r)$, or $(r, e)$ otherwise (i.e., $e \in \pi(s, r)$).

Note that closing the recurrences on the full space of states, as explained above, removes the obstacles in cases~(1) and~(2).
Owing to this recursive structure, the construction requires no preprocessing of the source-to-landmark replacement distances.
In addition, case~(2) no longer needs to be distinguished.
The reason why \textsf{GJM} restricts the last-edge recurrence to short replacement paths is that its auxiliary graph has no vertex for a far state, so the recurrence closes only as long as the predecessor state $(y, e)$ is again near.
In $H_s$, by contrast, the predecessor always has a place to come from, namely the state vertex of $(y, e)$ if $e \in \pi(s, y)$, whether that state is near or far, and $b_y$ otherwise.
Hence the last-edge recurrence applies to every near state with no restriction on the length of the replacement path, and its analysis never uses the reverse traversal of paths seen in case~(2).
The undirectedness of the graph is used in \textsf{GJM} only here, in the validity argument of case~(2), and in the separate preprocessing of case~(1); since our approach bypasses both, the proposed algorithm works on directed graphs.

\section{Preliminaries}

\subsection{Notation}
The notation $G = (V, E)$, $n = |V|$, $m = |E|$, $S \subseteq V$, $\sigma = |S|$, and $G \setminus e$ follows the definitions given in \cref{subsec:result}, and the notation and terminology introduced in \cref{subsec:techniques} are made precise below.
The input graph $G$ is assumed to be simple, unweighted, and directed throughout the paper.
For a directed graph $H$, possibly with nonnegative edge weights, and a pair of vertices $u, v$ of $H$, we denote by $\dist_H(u, v)$ the shortest-path distance from $u$ to $v$ in $H$, where the length of a path is the number of its edges if $H$ is unweighted and the total weight of its edges otherwise.
We omit the subscript $H$ when $H = G$.
For a path $P$, we denote by $|P|$ the length of $P$.
For each source $s \in S$, we denote by $T_s$ a BFS tree rooted at $s$ in $G$, and by $\pi(s, t)$ the shortest $s$--$t$ path determined by $T_s$.

We use the term \emph{high probability} to mean with probability at least $1 - n^{-c}$ for any constant $c > 0$.
Throughout the paper, we fix an arbitrary such constant $c$; the parameters of our algorithm and the constant factors in our analysis depend on it.
For an integer $k \ge 0$, we write $\Delta_k = (c + 5) \cdot 2^k \sqrt{n/\sigma} \log n$.

\subsection{Multiscale Landmarks}
Let $\ell = \lfloor \log \sqrt{\sigma n} \rfloor$.
For each level $k = 0, 1, \dots, \ell$, we sample each vertex independently with probability $\frac{1}{2^k}\sqrt{\sigma/n}$,
and let $\mathcal{L}_k$ be the resulting landmark set of level $k$.
We also write $\mathcal{L} = \bigcup_{k=0}^{\ell} \mathcal{L}_k$.

\begin{lemma}\label{lem:landmark-size}
    With high probability, $|\mathcal{L}_k| = \tilde{O}(\sqrt{\sigma n}/2^k)$ holds for every $0 \le k \le \ell$ simultaneously.
\end{lemma}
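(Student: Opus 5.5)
The plan is a standard Chernoff bound for each level, followed by a union bound over the $\ell+1 = O(\log n)$ levels. Fix a level $k$ with $0 \le k \le \ell$. The size $|\mathcal{L}_k|$ is a sum of $n$ independent indicator variables, each with success probability $p_k = \frac{1}{2^k}\sqrt{\sigma/n}$. Its expectation is therefore
\[
\mu_k = n p_k = \sqrt{\sigma n}/2^k.
\]
Since $k \le \ell = \lfloor \log \sqrt{\sigma n} \rfloor$, we have $2^k \le \sqrt{\sigma n}$, so $\mu_k \ge 1$. Also $p_k \le 1$ because $\sigma \le n$, so the sampling is well defined.

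Next I would apply the multiplicative Chernoff bound in a form that holds regardless of how small $\mu_k$ is. For $X = |\mathcal{L}_k|$ and any $a \ge 2e\mu_k$, we have $\Pr[X \ge a] \le 2^{-a}$. Set
\[
a_k = \max\{2e\mu_k,\ (c+2)\log n\}.
\]
Then $\Pr[|\mathcal{L}_k| \ge a_k] \le n^{-(c+2)}$. Because $\mu_k \ge 1$, we have $a_k \le 2e\mu_k + (c+2)\log n = O(\mu_k \log n)$, which is $\tilde{O}(\sqrt{\sigma n}/2^k)$. The additive logarithmic term is the only subtlety. Near the top level $\mu_k$ may be a constant, so the bound cannot be of the form $(1+\delta)\mu_k$ with a fixed $\delta$; the $\tilde{O}$ notation absorbs the $\log n$ factor instead.

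Finally, a union bound over the $\ell + 1 \le \log n + 1$ levels shows that all the bounds hold simultaneously except with probability at most $(\log n + 1)n^{-(c+2)} \le n^{-c}$ for large $n$. No step is a genuine obstacle. The only points needing care are using a Chernoff form valid for small $\mu_k$ and checking that $\ell$ is chosen so that $\mu_k \ge 1$ at every level.
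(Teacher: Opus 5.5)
Your proof is correct and follows essentially the paper's route: a per-level Chernoff bound that relies on $\mu_k \ge 1$ (from $2^k \le 2^{\ell} \le \sqrt{\sigma n}$), followed by a union bound over the $O(\log n)$ levels. The only difference is the Chernoff form. The paper uses $\Pr[X \ge (1+\delta)\mu] \le e^{-\delta\mu/3}$ with $\delta = 3(c+2)\log n$, which gives a multiplicative threshold $(3c+7)\mu_k \log n$; your $2^{-a}$ form gives the slightly sharper additive threshold $\max\{2e\mu_k, (c+2)\log n\}$, and both are $\tilde{O}(\sqrt{\sigma n}/2^k)$.
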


\begin{proof}
    The expected size of $\mathcal{L}_k$ is $\mathbb{E}[|\mathcal{L}_k|] = n \cdot \frac{1}{2^k} \sqrt{\sigma/n} = \frac{\sqrt{\sigma n}}{2^k}$.
    Using Chernoff's bound, we obtain
    $
        \Pr\left[|\mathcal{L}_k| \ge (1+\delta)\,\mathbb{E}[|\mathcal{L}_k|]\right] \le e^{-\delta \mathbb{E}[|\mathcal{L}_k|]/3}
    $
    for any $\delta \ge 1$.
    Setting $\delta = 3(c+2) \log n$, we get
    \[
        \Pr\left[|\mathcal{L}_k| \ge \frac{(3c+7)\sqrt{\sigma n}\log n}{2^k}\right]
        \le e^{-\frac{(c+2)\sqrt{\sigma n}\log n}{2^k}}
        \le e^{-(c+2)\log n} \le n^{-(c+2)},
    \]
    where the second inequality uses $2^k \le 2^{\ell} \le \sqrt{\sigma n}$.
    Thus $|\mathcal{L}_k| = \tilde{O}(\sqrt{\sigma n}/ 2^k)$ holds with probability at least $1 - n^{-(c+2)}$.
    Since $\ell + 1 \le n$, a union bound over the $\ell + 1 = O(\log n)$ levels shows that the claim holds for all $0 \le k \le \ell$ simultaneously with probability at least $1 - n^{-(c+1)}$.
\end{proof}

\begin{corollary}
    $|\mathcal{L}| = \tilde{O}(\sqrt{\sigma n})$ holds with high probability.
\end{corollary}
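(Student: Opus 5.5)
The plan is to sum the per-level bounds of \cref{lem:landmark-size} over the $\ell + 1$ levels, using that the resulting series is geometric. By \cref{lem:landmark-size}, with high probability the event $|\mathcal{L}_k| \le C \sqrt{\sigma n}\log n / 2^k$ holds for every $0 \le k \le \ell$ simultaneously. Here $C = 3c+7$ is the explicit constant appearing in its proof. We work on this single event, so no further union bound or probability estimate is needed.

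On this event, subadditivity of cardinality under union gives
\[
|\mathcal{L}| \le \sum_{k=0}^{\ell} |\mathcal{L}_k| \le C\sqrt{\sigma n}\log n \sum_{k=0}^{\ell} 2^{-k} < 2C\sqrt{\sigma n}\log n = \tilde{O}(\sqrt{\sigma n}).
\]
Because the geometric series is bounded by $2$, the polylogarithmic factor does not even gain an extra $\log n$ from the number of levels. Alternatively, one could bound the sum crudely by $(\ell+1)\cdot \tilde{O}(\sqrt{\sigma n})$ with $\ell = O(\log n)$, which still gives $\tilde{O}(\sqrt{\sigma n})$.

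There is no real obstacle. The only care needed is to invoke the simultaneous form of \cref{lem:landmark-size}, so that one high-probability event controls all levels at once. Applying the lemma level by level would require an extra union bound, though that too would be harmless.
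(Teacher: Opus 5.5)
Your proposal is correct and is the argument the paper leaves implicit: the paper states the corollary without proof, as an immediate consequence of \cref{lem:landmark-size}, obtained by summing the per-level bounds on that lemma's single simultaneous high-probability event. Using the geometric series to keep the bound at $O(\sqrt{\sigma n}\log n)$, with no extra logarithmic factor from the $\ell+1$ levels, is a valid and slightly sharper way to finish.
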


As a preprocessing step, we run a BFS from each landmark $r \in \mathcal{L}$ and compute
$\dist(r, t)$ for all $t \in V$.
This takes $\tilde{O}(m\sqrt{\sigma n})$ time.

\subsection{Near State and $k$-Far State}
For a source $s \in S$, a terminal $t \in V$, and an edge $e = (u, v)$ on $\pi(s, t)$, we call the pair $(t, e)$ a \emph{state}.
We classify each state $(t, e)$ according to the distance $\dist(v, t)$ from the failed edge to the terminal as follows.

\begin{definition}[near state and $k$-far state]\label{def:near-far}
    A state $(t, e)$ with $e = (u, v)$ is
    \begin{itemize}
        \item \emph{near} if $\dist(v, t) < \Delta_0$, and
        \item \emph{$k$-far} ($0 \le k \le \ell$) if $\Delta_k \le \dist(v, t) < 2\Delta_k = \Delta_{k+1}$.
    \end{itemize}
\end{definition}
Note that every state that is not near is $k$-far for some $k$.
Our algorithm uses the following basic property mentioned in~\cite{GJM20} (see \cref{fig:far-lemma} for an illustration).

\begin{figure}[t]
    \centering
    \includegraphics[width=0.7\textwidth]{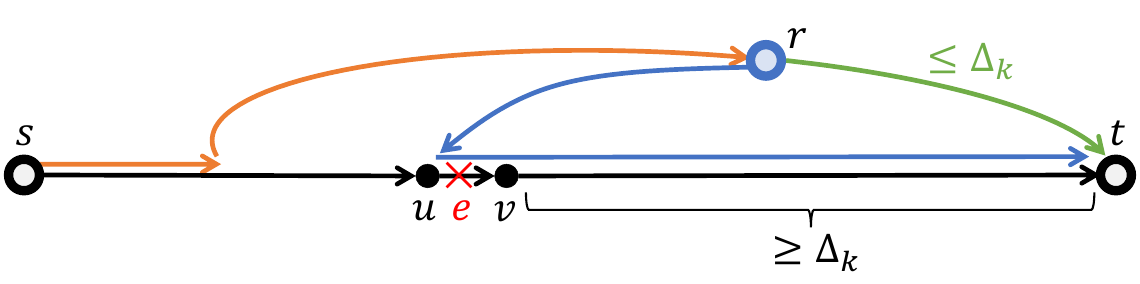}
    \caption{An illustration of \cref{lem:far}. For a $k$-far state $(t, e)$ with $e = (u, v)$, a shortest $s$--$t$ path in $G \setminus e$ decomposes at a landmark $r \in \mathcal{L}_k$ with $\dist(r, t) \le \Delta_k$ into a replacement path from $s$ to $r$ (orange) and a shortest $r$--$t$ path (green) with high probability. If a shortest $r$--$t$ path traversed $e$ (blue), its subpath from $v$ to $t$ would give $\dist(v, t) < \Delta_k$, contradicting $\dist(v, t) \ge \Delta_k$ from the $k$-far condition.}
    \label{fig:far-lemma}
\end{figure}

\begin{lemma}\label{lem:far}
    With high probability, for every source $s \in S$ and every $k$-far state $(t, e)$ ($0 \le k \le \ell$),
    \[
        \dist_{G \setminus e}(s, t)
        = \min_{\substack{r \in \mathcal{L}_k \setminus \{t\} \\ \dist(r, t) \le \Delta_k}}
        \left\{ \dist_{G \setminus e}(s, r) + \dist(r, t) \right\}.
    \]
\end{lemma}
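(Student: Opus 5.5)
We prove the two inequalities separately; only the upper bound (``$\le$'' direction of the minimum achieving the value) needs randomness.

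\textbf{Lower bound ($\le$ side of the min).} For every $r \in \mathcal{L}_k \setminus \{t\}$ with $\dist(r,t) \le \Delta_k$, we show $\dist_{G\setminus e}(s,t) \le \dist_{G\setminus e}(s,r) + \dist(r,t)$. Take any shortest $r$--$t$ path $Q$ in $G$. If $Q$ contained $e=(u,v)$, its suffix from $v$ to $t$ would have length strictly less than $|Q| \le \Delta_k$. This gives $\dist(v,t) < \Delta_k$, contradicting $k$-farness. Hence $Q$ lies in $G \setminus e$. Concatenating a shortest $s$--$r$ path in $G\setminus e$ with $Q$ gives an $s$--$t$ walk in $G\setminus e$ of the stated length. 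This holds deterministically, so the minimum is at least $\dist_{G\setminus e}(s,t)$. (If no admissible $r$ exists, the minimum is $+\infty$, which is consistent.)

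\textbf{Upper bound.} Let $P$ be a shortest $s$--$t$ path in $G\setminus e$. Assume it is finite; otherwise the upper bound is trivial since every term is $\ge \infty$ by the above. Consider the last $\Delta_k$ vertices of $P$ before $t$, i.e.\ the vertices $x$ on $P$ with $1 \le |P[x,t]| \le \Delta_k$. There are exactly $\lfloor\Delta_k\rfloor$ of them, because $|P| \ge \dist(s,t) \ge \dist(v,t) \ge \Delta_k$. Since $P$ is shortest in $G\setminus e$, for any such $x$ we have $\dist_{G\setminus e}(s,t) = \dist_{G\setminus e}(s,x) + |P[x,t]|$, and $|P[x,t]| \ge \dist(x,t)$. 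So if some such $x$ lies in $\mathcal{L}_k$, then $r = x$ satisfies $r \ne t$ and $\dist(r,t) \le \Delta_k$. Its term is at most $\dist_{G\setminus e}(s,x)+|P[x,t]| = \dist_{G\setminus e}(s,t)$, which gives equality.

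\textbf{Probability.} The step needing care is that $P$ depends on $(s,t,e)$, so the hitting event must be fixed before sampling. We fix, for each triple $(s,t,e)$, one canonical replacement path $P_{s,t,e}$ chosen independently of the randomness, e.g.\ lexicographically first. Its last $\lfloor\Delta_k\rfloor$ vertices form a fixed set of at least $(c+5)2^k\sqrt{n/\sigma}\log n - 1$ vertices. The probability that $\mathcal{L}_k$ misses them all is
\[
\left(1-\tfrac{1}{2^k}\sqrt{\sigma/n}\right)^{\Delta_k-1} \le e^{-(c+5)\log n + 1} = O(n^{-(c+5)}),
\]
taking $\log$ as natural, or adjusting constants otherwise. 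A union bound over at most $\sigma \cdot n \cdot n \le n^3$ triples $(s,t,e)$ and $\ell+1 \le n$ levels gives failure probability $O(n^{-(c+1)})$, which is high probability. The only subtle points are this independence of the canonical paths from the sample and checking that $\Delta_k \ge 1$ vertices are available, which $\dist(v,t) \ge \Delta_k$ guarantees.
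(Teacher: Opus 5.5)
Your proof is correct and takes essentially the same route as the paper. The deterministic direction uses the observation that a shortest $r$--$t$ path through $e$ would force $\dist(v,t) < \Delta_k$. The probabilistic direction fixes a replacement path $P$ independently of the sampling, hits its last $\lfloor \Delta_k \rfloor$ vertices before $t$ with $\mathcal{L}_k$, and finishes with a union bound over at most $n^3$ states; your extra union over the $\ell+1$ levels is unnecessary, since each state is $k$-far for only one $k$, but harmless.
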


\begin{proof}
    Fix a source $s \in S$ and a $k$-far state $(t, e)$ with $e = (u, v)$, and let $D$ denote the right-hand side of the claimed equality.

    We first show $\dist_{G \setminus e}(s, t) \le D$.
    Let $r \in \mathcal{L}_k \setminus \{t\}$ be any vertex with $\dist(r, t) \le \Delta_k$ (if no such $r$ exists, then $D = \infty$ and the inequality is trivial).
    Then, any shortest $r$--$t$ path in $G$ avoids $e$ because if it contained $e$, its subpath from $v$ to $t$ would give $\dist(v, t) \le \dist(r, t) - 1 < \Delta_k$, contradicting $\dist(v, t) \ge \Delta_k$ from the $k$-far condition.
    Hence $\dist_{G \setminus e}(r, t) = \dist(r, t)$, and the triangle inequality yields
    \[
        \dist_{G \setminus e}(s, t)
        \le \dist_{G \setminus e}(s, r) + \dist_{G \setminus e}(r, t)
        = \dist_{G \setminus e}(s, r) + \dist(r, t).
    \]
    Taking the minimum over all such $r$ gives $\dist_{G \setminus e}(s, t) \le D$.

    We next show that $D \le \dist_{G \setminus e}(s, t)$ holds with probability at least $1 - n^{-(c+4)}$.
    We may assume $\dist_{G \setminus e}(s, t) < \infty$, as the inequality is trivial otherwise.
    Fix a shortest $s$--$t$ path $P$ in $G \setminus e$.
    Note that $P$ is determined by $G$, $e$, $s$, and $t$, independently of the sampling of $\mathcal{L}_k$.
    Since $v$ lies on $\pi(s, t)$, we have
    $\dist_{G \setminus e}(s, t) \ge \dist(s, t) = \dist(s, v) + \dist(v, t) \ge \Delta_k$,
    so $P$ contains at least $\lfloor \Delta_k \rfloor$ vertices other than $t$.
    Let $W$ be the set of $\lfloor \Delta_k \rfloor$ vertices immediately preceding $t$ on $P$ (these are distinct, as $P$ is simple).
    Each vertex of $W$ belongs to $\mathcal{L}_k$ independently with probability $p_k = \frac{1}{2^k}\sqrt{\sigma/n}$, and since $\lfloor \Delta_k \rfloor > \Delta_k - 1$ and $p_k \le 1$, we have $p_k \lfloor \Delta_k \rfloor > p_k \Delta_k - 1 = (c+5) \log n - 1 \ge (c+4) \log n$, where the last inequality uses $\log n \ge 1$.
    Hence
    \[
        \Pr[W \cap \mathcal{L}_k = \emptyset]
        = (1 - p_k)^{\lfloor \Delta_k \rfloor}
        \le e^{-p_k \lfloor \Delta_k \rfloor}
        \le n^{-(c+4)}.
    \]
    Suppose $W \cap \mathcal{L}_k \neq \emptyset$, and take any $r \in W \cap \mathcal{L}_k$, which satisfies $r \neq t$.
    Let $P_1$ and $P_2$ be the subpaths of $P$ from $s$ to $r$ and from $r$ to $t$, respectively.
    Then $\dist(r, t) \le |P_2| \le \Delta_k$, so $r$ is a feasible candidate of the minimum defining $D$, and
    \[
        D \le \dist_{G \setminus e}(s, r) + \dist(r, t)
        \le |P_1| + |P_2|
        = \dist_{G \setminus e}(s, t).
    \]
    Therefore, for each fixed $s$ and $(t, e)$, the claimed equality fails with probability at most $n^{-(c+4)}$.
    Since the number of states over all sources is at most $\sigma n^2 \le n^3$, a union bound shows that the equality holds for all sources and all $k$-far states simultaneously with probability at least $1 - n^{-(c+1)}$.
\end{proof}

\section{Our Algorithm}

In this section, we present a very simple algorithm for solving the directed multiple source replacement paths problem in $\tilde{O}(m\sqrt{\sigma n} + \sigma n^2)$ time.

\paragraph{The Construction of Auxiliary Graphs.}\label{para:graph-construction}
For each source $s \in S$, we construct a weighted directed auxiliary graph $H_s$.
The vertex set of $H_s$ consists of
\begin{enumerate}
    \item a super source $z_s$,
    \item a base vertex $b_v$ for each $v \in V$, and
    \item a state vertex $q_{t, e}$ for each state $(t, e)$.
\end{enumerate}
The edge set of $H_s$ consists of the following three types of edges.
\begin{enumerate}
    \item Base edge:
          For each vertex $v \in V$, add an edge $(z_s, b_v)$ of weight $\dist(s, v)$.
    \item Near state edge:
          For each near state $q_{t, e}$ and each incoming edge $(v, t) \neq e$ of $t$,
          \begin{itemize}
              \item if $e \notin \pi(s, v)$, add an edge $(b_v, q_{t, e})$ of weight $1$, and
              \item if $e \in \pi(s, v)$, add an edge $(q_{v, e}, q_{t, e})$ of weight $1$.
          \end{itemize}
    \item Far state edge:
          For each $k$-far state $q_{t, e}$ and each $r \in \mathcal{L}_k \setminus \{t\}$ with $\dist(r, t) \le \Delta_k$,
          \begin{itemize}
              \item if $e \notin \pi(s, r)$, add an edge $(b_r, q_{t, e})$ of weight $\dist(r, t)$, and
              \item if $e \in \pi(s, r)$, add an edge $(q_{r, e}, q_{t, e})$ of weight $\dist(r, t)$.
          \end{itemize}
\end{enumerate}

We can show that the shortest-path distance from $z_s$ to $q_{t, e}$ in $H_s$ equals the shortest-path distance from $s$ to $t$ avoiding $e$ in $G$ for any state $(t, e)$.

\begin{lemma}\label{lem:correctness}
    With high probability, $\dist_{H_s}(z_s, q_{t, e}) = \dist_{G \setminus e}(s, t)$ holds for every source $s \in S$ and every state $(t, e)$ simultaneously.
\end{lemma}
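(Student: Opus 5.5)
We prove the two inequalities separately, conditioning throughout on the high-probability event of \cref{lem:far}. The upper bound $\dist_{H_s}(z_s, q_{t,e}) \ge \dist_{G\setminus e}(s,t)$ holds deterministically. The lower bound $\dist_{H_s}(z_s, q_{t,e}) \le \dist_{G\setminus e}(s,t)$ holds on that event.

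\emph{Soundness ($\ge$).} We show that every edge of $H_s$ respects the following potential. Assign $\phi(z_s)=0$, $\phi(b_w)=\dist(s,w)$, and $\phi(q_{t,e})=\dist_{G\setminus e}(s,t)$. The claim is that $\phi(y)\le \phi(x)+w(x,y)$ for every edge $(x,y)$; since $\phi(z_s)=0$, this gives $\dist_{H_s}(z_s,x)\ge\phi(x)$. We check the edge types one at a time.
\begin{itemize}
\item Base edges are immediate.
\item For an edge $(b_v,q_{t,e})$ with $e\notin\pi(s,v)$, the path $\pi(s,v)$ lies in $G\setminus e$. Appending the edge $(v,t)\neq e$ gives $\dist_{G\setminus e}(s,t)\le \dist(s,v)+1$.
\item For an edge $(q_{v,e},q_{t,e})$, we have $\dist_{G\setminus e}(s,t)\le\dist_{G\setminus e}(s,v)+1$.
\item Far edges from $b_r$ or $q_{r,e}$ are handled the same way. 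We use that $\dist_{G\setminus e}(r,t)=\dist(r,t)$, which holds because a shortest $r$--$t$ path through $e$ would force $\dist(v,t)<\Delta_k$; this is the deterministic half of the proof of \cref{lem:far}.
\end{itemize}
None of these steps uses randomness.

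\emph{Completeness ($\le$).} We argue by induction on pairs $(\dist_{G\setminus e}(s,t), \dist(s,t))$, ordered lexicographically, or equivalently by strong induction on $\dist_{G\setminus e}(s,t)$ with a secondary key. Fix a state $(t,e)$ with finite value $d=\dist_{G\setminus e}(s,t)$.

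If $(t,e)$ is near, take a shortest $s$--$t$ path in $G\setminus e$ and let $(y,t)$ be its last edge. This edge differs from $e$, and $\dist_{G\setminus e}(s,y)=d-1$. There are two cases.
\begin{itemize}
\item If $e\notin\pi(s,y)$, then $\dist(s,y)\le d-1$. The edge $(b_y,q_{t,e})$ gives $\dist_{H_s}\le \dist(s,y)+1\le d$.
\item If $e\in\pi(s,y)$, then $(y,e)$ is a state with value $d-1$. By induction $\dist_{H_s}(z_s,q_{y,e})\le d-1$, and the edge $(q_{y,e},q_{t,e})$ gives the bound $d$.
\end{itemize}

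If $(t,e)$ is $k$-far, \cref{lem:far} supplies $r\in\mathcal L_k\setminus\{t\}$ with $\dist(r,t)\le\Delta_k$ and $d=\dist_{G\setminus e}(s,r)+\dist(r,t)$. Since $r\ne t$, $\dist(r,t)\ge1$, so $\dist_{G\setminus e}(s,r)<d$. Again there are two cases.
\begin{itemize}
\item If $e\notin\pi(s,r)$, then $\dist(s,r)=\dist_{G\setminus e}(s,r)$, and the edge $(b_r,q_{t,e})$ gives the bound $d$.
\item Otherwise $(r,e)$ is a state with strictly smaller value. Induction together with the edge $(q_{r,e},q_{t,e})$ gives the bound $d$.
\end{itemize}

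Because each recursive call strictly decreases the value, plain strong induction on $\dist_{G\setminus e}(s,t)$ suffices and no secondary key is needed. Infinite values are trivial: the upper bound is vacuous and soundness gives $\infty$.

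The main obstacle is the probabilistic bookkeeping in the $k$-far case. The induction invokes \cref{lem:far} at many states, so we must avoid conditioning issues. We handle this by noting that \cref{lem:far} already holds simultaneously for all sources and all far states with high probability. On that single event the whole argument is deterministic, which yields the simultaneous statement. We must also confirm that the states $(y,e)$ and $(r,e)$ reached by the induction really are states, i.e.\ that $e$ lies on the BFS-tree path. This is exactly the branching condition used in the construction of $H_s$.
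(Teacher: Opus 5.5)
Your proof is correct and follows the paper's argument: the ($\le$) direction is the same strong induction on $\dist_{G\setminus e}(s,t)$, using the last arc for near states and the landmark from \cref{lem:far} for far states, all carried out on the single high-probability event of \cref{lem:far}. For ($\ge$) you verify a feasible potential edge by edge, whereas the paper realizes each $z_s$--$q_{t,e}$ path as an $s$--$t$ walk in $G\setminus e$; both rest on the same local facts, including the deterministic half of \cref{lem:far} for far edges, so this is only a rephrasing.
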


\begin{proof}
    Throughout the proof, we write $e = (u, v)$.
    The direction ($\ge$) holds deterministically, and we prove the direction ($\le$) under the event that the equality of \cref{lem:far} holds for all sources and all $k$-far states, which occurs with probability at least $1 - n^{-(c+1)}$.

    ($\ge$) We show that every $z_s$--$q_{t, e}$ path $P$ in $H_s$ corresponds to an $s$--$t$ walk in $G \setminus e$ of the same length.
    By construction, every edge entering a state vertex of failed edge $e$ leaves either a base vertex $b_y$ with $e \notin \pi(s, y)$ or another state vertex $q_{y, e}$, so $P$ has the form $z_s, b_{u_0}, q_{u_1, e}, \dots, q_{u_j, e}$ with $u_j = t$.
    Each edge of $P$ is realized by a walk in $G \setminus e$ of the same length:
    the base edge $(z_s, b_{u_0})$ by the tree path $\pi(s, u_0)$, which avoids $e$;
    a near state edge of weight $1$ by the corresponding arc $(u_i, u_{i+1}) \neq e$;
    and a far state edge of weight $\dist(u_i, u_{i+1})$ by a shortest $u_i$--$u_{i+1}$ path.
    The last path avoids $e$: otherwise its subpath from $v$ would give $\dist(v, u_{i+1}) < \dist(u_i, u_{i+1}) \le \Delta_k$, contradicting $\dist(v, u_{i+1}) \ge \Delta_k$ from the $k$-far condition of $(u_{i+1}, e)$.
    Concatenating these walks yields an $s$--$t$ walk in $G \setminus e$ of the same length as $P$.

    ($\le$) We show $\dist_{H_s}(z_s, q_{t, e}) \le d$ for $d = \dist_{G \setminus e}(s, t)$, by strong induction on $d$ (the case $d = \infty$ is trivial).
    If $(t, e)$ is near, let $(y, t)$ be the last arc of a shortest $s$--$t$ path in $G \setminus e$ (note that $d \ge 1$, since $e \in \pi(s, t)$ implies $t \neq s$).
    Then $(y, t) \neq e$, and $\dist_{G \setminus e}(s, y) = d - 1$: the prefix of the path ending at $y$ gives $\dist_{G \setminus e}(s, y) \le d - 1$, while appending the arc $(y, t)$ to a shortest $s$--$y$ path in $G \setminus e$ gives $d \le \dist_{G \setminus e}(s, y) + 1$.
    We set $w = 1$.
    If $(t, e)$ is $k$-far, the conditioned event of \cref{lem:far} gives a landmark $y \in \mathcal{L}_k \setminus \{t\}$ with $\dist(y, t) \le \Delta_k$ and $\dist_{G \setminus e}(s, y) + \dist(y, t) = d$, and we set $w = \dist(y, t) \ge 1$.
    In both cases, $H_s$ contains a state edge of weight $w$ entering $q_{t, e}$ that leaves $b_y$ if $e \notin \pi(s, y)$, and $q_{y, e}$ if $e \in \pi(s, y)$.
    In the former case, $\dist_{H_s}(z_s, b_y) = \dist(s, y) = \dist_{G \setminus e}(s, y)$, since the only edge entering $b_y$ is the base edge and $\pi(s, y)$ survives in $G \setminus e$.
    In the latter case, $(y, e)$ is a state with $\dist_{G \setminus e}(s, y) = d - w < d$, so the induction hypothesis gives $\dist_{H_s}(z_s, q_{y, e}) \le \dist_{G \setminus e}(s, y)$.
    Either way, $\dist_{H_s}(z_s, q_{t, e}) \le \dist_{G \setminus e}(s, y) + w = d$.
\end{proof}

\paragraph*{Algorithm and Analysis.}
For each auxiliary graph $H_s$, we run Dijkstra's algorithm from $z_s$ and output
$\dist_{H_s}(z_s, q_{t, e})$ for each state $(t, e)$.
The details are given in \cref{alg:msrp}.

\begin{algorithm}[H]
    \caption{Directed MSRP}\label{alg:msrp}
    \KwIn{A directed graph $G = (V, E)$ and a source set $S \subseteq V$.}
    \KwOut{The distance $\dist_{G \setminus e}(s, t)$ for each source $s \in S$, terminal $t \in V$, and edge $e \in \pi(s,t)$.}
    \BlankLine
    \For{each source $s \in S$}{
        Construct the auxiliary graph $H_s$ by the procedure described \mbox{ in \hyperref[para:graph-construction]{``The Construction of Auxiliary Graphs''}}\;
        $\dist_{H_s}(z_s, \cdot) \gets$ \textbf{Dijkstra}$(H_s, z_s)$\;
        \lFor{each state $(t, e)$}{
            $\dist_{G \setminus e}(s, t) \gets \dist_{H_s}(z_s, q_{t, e})$
        }
    }
\end{algorithm}

\begin{lemma}\label{lem:edge-count}
    With high probability, the number of edges of $H_s$ is $\tilde{O}(m\sqrt{n/\sigma} + n^2)$ for every source $s \in S$ simultaneously.
\end{lemma}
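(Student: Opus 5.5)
We count the three edge types of $H_s$ separately, working under the high-probability event of \cref{lem:landmark-size}, which fixes the sampling and does not depend on $s$. This event gives $|\mathcal{L}_k| = \tilde{O}(\sqrt{\sigma n}/2^k)$ for every $k$. All bounds below are therefore deterministic consequences of this single event and hold for every source simultaneously.

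The base edges contribute exactly $n$. For the near state edges, fix a terminal $t$. There are at most $|\pi(s,t)| \le n-1$ failed edges $e$ on $\pi(s,t)$, and every near state $(t,e)$ receives one edge per incoming arc of $t$. So the total is at most $\sum_t n \cdot \indeg(t) = nm$, which is too crude. The point is that near states are restricted: $(t,e)$ is near only if $\dist(v,t) < \Delta_0$, and since $e=(u,v)$ lies on the shortest path $\pi(s,t)$, the tail distance $\dist(v,t)$ takes distinct values for distinct edges of $\pi(s,t)$. Hence each $t$ has fewer than $\Delta_0 + 1 = \tilde{O}(\sqrt{n/\sigma})$ near states. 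Summing $\indeg(t)$ times this count over all $t$ gives $\tilde{O}(m\sqrt{n/\sigma})$.

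For the far state edges, fix $t$ and a level $k$. By the same distinct-tail-distance argument, the number of $k$-far states $(t,e)$ is at most $\Delta_{k+1}-\Delta_k + 1 = O(\Delta_k) = \tilde{O}(2^k\sqrt{n/\sigma})$. Each such state receives at most $|\mathcal{L}_k| = \tilde{O}(\sqrt{\sigma n}/2^k)$ edges. The product is $\tilde{O}(n)$ per pair $(t,k)$: the factors $2^k$ cancel, and so do the factors of $\sigma$. Summing over $n$ terminals and $\ell+1 = O(\log n)$ levels gives $\tilde{O}(n^2)$.

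Adding the three contributions yields $\tilde{O}(m\sqrt{n/\sigma} + n^2)$. The main step is the observation that the states of a terminal in one scale correspond to distinct values of $\dist(v,t)$ within a window of width $O(\Delta_k)$. That is what makes the landmark density and the window width cancel, and it needs only that $\pi(s,t)$ is a shortest path, not undirectedness.
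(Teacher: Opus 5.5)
Your proof is correct and follows essentially the same approach as the paper: you condition once on the event of \cref{lem:landmark-size}, count $n$ base edges, bound the near state edges by $m(\Delta_0+1)$, and bound the far state edges by $\sum_{t}\sum_{k}(\Delta_k+1)\,|\mathcal{L}_k| = \tilde{O}(n^2)$. Your distinct-tail-distance observation, that $\dist(v,t) = \dist(s,t) - \dist(s,v)$ is distinct for distinct edges of the shortest path $\pi(s,t)$, simply makes explicit the per-terminal bounds on the number of near and $k$-far states that the paper states without justification.
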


\begin{proof}
    We condition on the event of \cref{lem:landmark-size}, which occurs with probability at least $1 - n^{-(c+1)}$; note that the landmarks are sampled only once and shared by all sources.
    We bound the number of edges of each type.
    \begin{enumerate}
        \item The number of base edges is $n$.
        \item For near state edges, since the number of near states $(t, \cdot)$ is at most $\Delta_0 + 1$ for each $t$, we have
              \[
                  \sum_{t \in V} \indeg(t) \times \#\{\text{near states } (t, \cdot)\}
                  \le m (\Delta_0 + 1)
                  = \tilde{O}(m \sqrt{n/\sigma}).
              \]
        \item For far state edges, since the number of $k$-far states is at most $\Delta_k + 1$ for each $t$, and $|\mathcal{L}_k| = \tilde{O}(\sqrt{\sigma n}/2^k)$ under the conditioned event, we have
              \[
                  \sum_{t \in V} \sum_{k=0}^{\ell} (\Delta_k + 1) |\mathcal{L}_k|
                  \le \sum_{t \in V} \sum_{k=0}^{\ell} \tilde{O}(n)
                  = \tilde{O}(n^2). \qedhere
              \]
    \end{enumerate}
\end{proof}

\begin{lemma}\label{lem:running-time}
    The running time of \cref{alg:msrp} is $\tilde{O}(m\sqrt{\sigma n} + \sigma n^2)$ with high probability.
\end{lemma}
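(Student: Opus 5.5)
We account for the cost of each phase of \cref{alg:msrp}, conditioning on the events of \cref{lem:landmark-size} and \cref{lem:edge-count}; both hold with high probability, and they are the only randomness that affects the time bound. The phases are: preprocessing, building each $H_s$, and running Dijkstra on each $H_s$. Summed over the $\sigma$ sources, each phase should cost $\tilde{O}(m\sqrt{\sigma n} + \sigma n^2)$.

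\emph{Preprocessing.} We run a BFS from every landmark in $\mathcal{L}$, which costs $O(m|\mathcal{L}|) = \tilde{O}(m\sqrt{\sigma n})$. We also run a BFS from every source $s \in S$ to obtain $T_s$ and $\dist(s,\cdot)$, which costs $O(\sigma m)$. This is dominated, since $\sigma \le \sqrt{\sigma n}$. From each $T_s$ we compute preorder/postorder intervals in $O(n)$ time. Then the test ``$e = (u,v) \in \pi(s,y)$'' becomes an $O(1)$ ancestor check: $e$ is a tree edge with $v$ an ancestor of $y$ in $T_s$.

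\emph{Constructing $H_s$.} The goal is to build $H_s$ in time linear in its size, up to polylogarithmic factors. The states for $s$ are enumerated by walking up $T_s$ from each $t$. This lists the edges of $\pi(s,t)$ together with $\dist(v,t) = \dist(s,t) - \dist(s,v)$, which lets us classify each state as near or $k$-far. The total cost is $O(n^2)$. State vertices $q_{y,e}$ are indexed by pairs, e.g.\ by $t$ and the depth of $e$ on $\pi(s,t)$, so each is located in $O(1)$.

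For each near state $(t,e)$ we scan the in-edges of $t$. For each $k$-far state $(t,e)$ we scan $\mathcal{L}_k$ and keep the $r$ with $\dist(r,t) \le \Delta_k$, using the precomputed BFS distances. Each candidate edge is produced and its endpoint is located in $O(1)$ time.

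One subtlety is that scanning $\mathcal{L}_k$ costs $|\mathcal{L}_k|$ per state even for candidates that are later rejected. This cost is exactly what the far-edge bound in \cref{lem:edge-count} already charges, namely $\sum_t\sum_k(\Delta_k+1)|\mathcal{L}_k|$. So the construction time is $\tilde{O}(m\sqrt{n/\sigma} + n^2)$ per source.

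\emph{Dijkstra.} $H_s$ has $O(n^2)$ vertices and, by \cref{lem:edge-count}, $\tilde{O}(m\sqrt{n/\sigma} + n^2)$ edges. Dijkstra with a binary heap therefore runs in $\tilde{O}(m\sqrt{n/\sigma} + n^2)$ time, and writing the outputs takes $O(n^2)$. Multiplying by $\sigma$ gives $\tilde{O}(m\sqrt{\sigma n} + \sigma n^2)$. A union bound over the two conditioned events completes the argument.

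The main obstacle is not Dijkstra but the implementation details of the construction. Membership tests $e \in \pi(s,r)$ and lookups of $q_{r,e}$ must each take $O(1)$, and rejected landmark candidates must be counted. The ancestor-interval trick and the observation above, that the scanning cost coincides with the quantity bounded in \cref{lem:edge-count}, handle both.
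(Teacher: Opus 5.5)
Your proof is correct and follows essentially the same accounting as the paper: the landmark BFSs cost $\tilde{O}(m\sqrt{\sigma n})$, and each $H_s$ is built in time bounded by the very quantity counted in the edge-count lemma, including rejected landmark candidates. Dijkstra then costs $\tilde{O}(m\sqrt{n/\sigma}+n^2)$ per source. The only difference is cosmetic: you test $e\in\pi(s,r)$ with ancestor intervals in $T_s$ and locate $q_{r,e}$ with a depth-indexed table, whereas the paper uses a single $n\times n$ table indexed by $(t,y)$, with $y$ the head of $e$, that does both the membership test and the lookup.
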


\begin{proof}
    The preprocessing (BFS from the landmarks) takes $\tilde{O}(m\sqrt{\sigma n})$ time.

    Next, we show that each auxiliary graph $H_s$ can be constructed in $\tilde{O}(m\sqrt{n/\sigma} + n^2)$ time.
    We first run a BFS from $s$ to obtain the tree $T_s$ and the distances $\dist(s, v)$ for all $v \in V$ in $O(m)$ time.
    The states are then enumerated in $O(n^2)$ time by walking up $\pi(s, t)$ from each terminal $t \in V$, and each state $(t, e)$ with $e = (x, y)$ is classified as near or $k$-far in $O(1)$ time via $\dist(y, t) = \dist(s, t) - \dist(s, y)$.
    We store the enumerated states in an $n \times n$ table indexed by the pair $(t, y)$, which identifies the state $(t, e)$ since $e$ is the unique edge of $T_s$ entering $y$.
    This table allows us to decide whether $e \in \pi(s, v)$ for any vertex $v \in V$ in $O(1)$ time: by the definition of states, $e \in \pi(s, v)$ if and only if $(v, e)$ is a state, and in this case the lookup also returns the vertex $q_{v, e}$.
    The near state edges are generated by scanning the incoming edges of $t$ for each near state $(t, e)$, and the far state edges by scanning all landmarks $r \in \mathcal{L}_k$, together with the precomputed distances $\dist(r, t)$, for each $k$-far state $(t, e)$.
    Each scanned candidate is processed in $O(1)$ time using a table lookup, and the total number of scanned candidates is exactly the quantity bounded in the proof of \cref{lem:edge-count}.
    Hence, the construction of $H_s$ takes $\tilde{O}(m\sqrt{n/\sigma} + n^2)$ time.

    By \cref{lem:edge-count}, Dijkstra's algorithm on $H_s$ also runs in $\tilde{O}(m\sqrt{n/\sigma} + n^2)$ time for each $s \in S$.
    Hence, the total running time is
    $\tilde{O}(m\sqrt{\sigma n} + \sigma n^2).$
\end{proof}

\paragraph{The Proof of \cref{thm:main}.}
Finally, we prove our main theorem.

\begin{proof}[Proof of \cref{thm:main}]
    We show that \cref{alg:msrp} is such an algorithm.
    Its guarantees rest on two events: the event of \cref{lem:landmark-size}, which fails with probability at most $n^{-(c+1)}$ and under which \cref{alg:msrp} runs in $\tilde{O}(m\sqrt{\sigma n} + \sigma n^2)$ time (\cref{lem:edge-count,lem:running-time}); and the event of \cref{lem:far}, which fails with probability at most $n^{-(c+1)}$ and under which all the output distances are correct (\cref{lem:correctness}).
    By a union bound, both events hold simultaneously with probability at least $1 - 2n^{-(c+1)} \ge 1 - n^{-c}$.
    Since the constant $c > 0$ is arbitrary, the algorithm succeeds with high probability.
    Finally, the error of the algorithm is one-sided: the direction ($\ge$) of \cref{lem:correctness} holds deterministically, so every output value is at least the true replacement path distance, and the algorithm never underestimates it.
\end{proof}

\section{Concluding Remarks}
We presented a simple randomized combinatorial algorithm that solves the directed multiple source replacement paths problem in $\tilde{O}(m\sqrt{\sigma n} + \sigma n^2)$ time (\cref{thm:main}).
The first term matches the conditional lower bound of $m{(\sigma n)}^{1/2-o(1)}$ proved by Gupta et al.~\cite{GJM20} for combinatorial algorithms, and the second term is proportional to the output size.
Hence, our running time is near-optimal among combinatorial algorithms.
We conclude the paper with a few open questions:
\begin{itemize}
    \item \textbf{Derandomization.}
          Our algorithm is randomized since the multiscale landmarks are chosen by random sampling.
          For the RP and SSRP problems, deterministic algorithms matching the running times of their randomized counterparts are known~\cite{ACC19,BCFS21,BCCFS22}.
          Can these techniques, in particular the hierarchical derandomization framework of Bil{\`o} et al.~\cite{BCCFS22} for the directed SSRP problem, be leveraged to solve the directed MSRP problem deterministically in $\tilde{O}(m\sqrt{\sigma n} + \sigma n^2)$ time?
    \item \textbf{Multiple source distance sensitivity oracle.}
          A DSO can be significantly smaller than the explicit output of the corresponding replacement paths problem, and hence may be constructed faster than the problem can be solved explicitly.
          Indeed, in the single source setting, the oracle of Dey and Gupta~\cite{DG22} for undirected unweighted graphs has near-optimal size $\tilde{O}(n^{3/2})$ and $\tilde{O}(1)$ query time, and can be constructed in $\tilde{O}(m\sqrt{n})$ time, which matches the conditional lower bound for combinatorial SSRP algorithms and avoids the $\Theta(n^2)$ output size term inherent in solving the SSRP problem explicitly.
          In the multiple source setting, the oracle of Gupta and Singh~\cite{GS18} for undirected unweighted graphs answers queries from any of the $\sigma$ sources in $\tilde{O}(1)$ time using only $\tilde{O}(\sqrt{\sigma} n^{3/2})$ space, which is significantly smaller than the $O(\sigma n^2)$ words occupied by the explicit MSRP output.
          However, no construction faster than solving the MSRP problem is known.
          It would be interesting to investigate whether the techniques developed for the replacement paths problems, including ours, yield a multiple source DSO that can be constructed faster, ideally in $\tilde{O}(m\sqrt{\sigma n})$ time without the output size term $\sigma n^2$.
\end{itemize}

\paragraph{Acknowledgements.}
Kaito Harada was supported by JST BOOST, Japan Grant Number JPMJBS2402.
Taisuke Izumi was supported by JSPS KAKENHI Grant Number 23K24825.
This work was also supported by JST CRONOS Japan Grant Number JPMJCS24K2.

\paragraph{Use of AI tools.}
The main idea of our algorithm was obtained with the assistance of GPT-5.6 Pro (OpenAI).
The details of the algorithm and its analysis, including all proofs, were developed by the authors.
The authors also used Claude Fable 5 (Anthropic) to assist with the writing and editing of the manuscript.
All AI-assisted content was carefully reviewed and verified by the authors, who take full responsibility for the entire content of this paper.


\bibliographystyle{alphaurl}
\bibliography{references}

%

%

\end{document}